\documentclass[journal,twoside,web]{ieeecolor}

\usepackage{lcsys}

\usepackage{amsmath,amssymb,amsfonts}
\usepackage{braket}
\allowdisplaybreaks

\usepackage{graphicx}
\usepackage{placeins}
\usepackage{algorithm}
\usepackage[noend]{algpseudocode}
\usepackage{flushend}
\usepackage{booktabs}

\usepackage{hyperref}
\hypersetup{
  colorlinks = true,
  linkcolor  = blue,
  citecolor  = blue,
  urlcolor   = cyan,
  filecolor  = magenta,
}

\newtheorem{theorem}{Theorem}
\newtheorem{corollary}{Corollary}
\newtheorem{lemma}{Lemma}
\newtheorem{remark}{Remark}

\DeclareMathOperator{\Tr}{Tr}
\newcommand{\e}{\mathrm{e}}

\title{Fidelity-Based Robustness Margins for Finite-Time Quantum Control}

\author{S.\,P.\ O'Neil$^{1,*}$,
  F.\,C.\ Langbein$^2$,
  C.\,A.\ Weidner$^3$,
  E.\,A.\ Jonckheere$^4$,
  and S.\ Schirmer$^5$
  \thanks{Any opinions in this work are solely those of the authors and do not reflect those of the U.S.\ Army, the USMA, or the DoD\@.}
  \thanks{$^1$ Department of Electrical Engineering \& Computer Science, United States Military Academy, NY, USA\@.
  \texttt{sean.oneil@westpoint.edu}}
  \thanks{$^2$ School of Computer Science and Informatics, Cardiff University, Cardiff, CF24 4AG, UK\@.
  \texttt{frank@langbein.org}}
  \thanks{$^3$ Quantum Engineering Technology Laboratories, H.\,H.\ Wills Physics Laboratory and Department of Electrical and Electronic Engineering, University of Bristol, Bristol BS8 1FD, UK\@.
  \texttt{c.weidner@bristol.ac.uk}}
  \thanks{$^4$ Dept of Electrical \& Computer Engineering, University of Southern California, CA, USA\@.
  \texttt{jonckhee@usc.edu}}
  \thanks{$^5$ Faculty of Science \& Engineering, Physics, Swansea University, UK\@.
  \texttt{s.m.shermer@gmail.com}}
}

\begin{document}

\maketitle
\thispagestyle{empty}

\begin{abstract}
  We develop a structure-specific fidelity-threshold robustness margin for finite-dimensional closed quantum systems under piecewise-constant coherent control. A scalar physical parameter may perturb the drift, a control Hamiltonian, or another declared Hamiltonian component across the control horizon. A differential sensitivity bound for trace-amplitude gate fidelity yields a threshold-dependent Lipschitz constant on the connected safe parameter component and hence a certified finite perturbation radius. Recentering this certificate produces an iterative one-dimensional method that takes certified safe steps toward the first fidelity-threshold boundary in either parameter direction. A three-qubit gate-control example shows that these finite margins vary by up to a factor of three across controllers of comparable nominal fidelity and contain structure-dependent information not captured by nominal differential sensitivity alone.
\end{abstract}
\begin{IEEEkeywords}
  Quantum control, robust control, structured perturbations, fidelity, robustness margins
\end{IEEEkeywords}

\section{Introduction}

\IEEEPARstart{Q}{uantum} technologies require precise control of system dynamics, but model perturbations and environmental coupling can substantially reduce performance and eliminate quantum advantage~\cite{Koch_2022}. Robustness margins are therefore important~\cite{Petersen2013,Automatica}. Classical tools based on stability margins or the structured singular value ($\mu$)~\cite{Doyle1982},~\cite[Chap.~10]{Zhou} are of limited utility under coherent control: the ideal closed dynamics are typically marginally stable, and the objective is a finite-time fidelity at a prescribed readout time $t_f$ rather than asymptotic stability. Earlier sensitivity-based work derived conservative norm-based performance guarantees and iterative searches over locally worst-case uncertainty directions~\cite{schirmer2024,oneil_2024_sensitivity_bounds}. Here we instead fix a declared physical scalar parameter, derive a fidelity-dependent Lipschitz bound on its nominal connected safe component, and prove that adaptive recentering produces certified safe steps toward the first threshold boundary. Finite-time guarantees can also be derived from norm- or set-based uncertainty descriptions: Lidar, Zanardi, and Khodjasteh bound distances between quantum evolutions by Hamiltonian operator norms~\cite{lidar2008}, while Berberich, Fink, and Holm bound worst-case fidelity for circuits with independent multiplicative errors in the complete gate generators~\cite{berberich2024}, later extended to a set-based framework covering coherent, time-dependent, and Markovian errors~\cite{berberich2025}. These methods give global, non-iterative guarantees from declared gate-level error sets and generator information. Kosut, Lidar, and Rabitz provide a broader universal time-bandwidth bound over general uncertainty classes~\cite{kosut2025} for perfect (unit-fidelity) nominal controllers, using aggregate norm and interaction-picture uncertainty measures rather than an explicit physical parameterization of the perturbation. Other approaches focus on robust controller synthesis~\cite{kiely2024,zou2025}.

Here we consider the \emph{post-design certification problem} for unitary gate control under finite Hamiltonian perturbations.  For a \emph{given nominal controller}, which may itself be \emph{imperfect} relative to the target, we focus on \emph{structured} affine Hamiltonian perturbations affecting the drift, a control Hamiltonian, or selected Hermitian matrix entries, whose structure is assumed to be known.  Piecewise-constant control represents the evolution as a product of unitary exponentials, allowing the fidelity derivative to be bounded interval by interval~\cite{oneil_2024_sensitivity_bounds}. Our method converts the local, structure-specific sensitivity bound into a finite threshold certificate and repeatedly recenters it toward the first boundary of the nominal connected safe component, treating one scalar parameter at a time.  These margins are analogous \emph{in purpose} to classical structured robustness margins, but are finite-time fidelity certificates rather than frequency-domain $\mu$-bounds. 

Our contributions are: (i)~a fidelity-dependent trace-amplitude gate-fidelity sensitivity bound for a declared Hermitian perturbation structure, which vanishes at perfect fidelity; (ii)~a threshold-dependent local Lipschitz radius on the nominal connected safe component, allowing imperfect nominal target fidelity; (iii)~a certified adaptive continuation algorithm for a fixed physical parameter that advances toward the first threshold boundary in each direction; and (iv)~a three-qubit case study showing that, particularly for control-Hamiltonian uncertainty, finite margins vary substantially even when nominal fidelity and nominal differential sensitivity do not provide a consistent controller ranking.

\section{Quantum dynamics and control problem\label{sec:quantum control problem}}

We consider a closed quantum system with Hilbert space $\mathcal{H}$ of dimension $N<\infty$, so pure states are complex vectors $\psi(t) \in \mathbb{C}^N$ in an orthonormal basis of $\mathcal{H}$, evolving by the Schr\"odinger equation $\mathrm{i}\hbar \tfrac{d}{dt} \psi(t) = H(t) \psi(t)$ with self-adjoint $H(t)$. We introduce $M$ control fields steering the dynamics over a readout interval $[0,t_f]$, restricted as usual to piecewise-constant functions on $\tau$ uniform intervals of length $\Delta$, with $t_k=k \Delta$, $t_0=0$ and $t_f=\tau \Delta$~\cite{Koch_2022,KHANEJA_2005}. Each control pulse $f_{m}^{(k)} \in \mathbb{R}$ enters through an interaction Hamiltonian $H_m$ such that in the $k$-th interval
\begin{equation}
  H^{(k)} = H_0 + \sum\limits_{m = 1}^M H_m f_{m}^{(k)}, \quad 1 \leq k \leq \tau,
\end{equation}
where $H_0$ is the drift Hamiltonian. With $\hbar=1$, the solution on $[t_{k-1},t_k]$ is $U^{(k)}  = \exp[-\mathrm{i}H^{(k)} \Delta]$, so the evolution $\dot{U}(t) = -\mathrm{i} H(t) U(t)$ with $U(0) = I$ gives
\begin{equation}\label{eq:schrodinger_gate}
  U(t_f) = \prod_{k=1}^{\tau}U^{(k)} = U^{(\tau)} U^{(\tau-1)} \cdots U^{(1)},
\end{equation}
where $\prod_{k=1}^{\tau}$ denotes an ordered product. We optimize $U(t_f) \in \mathbb{U}(N)$ to maximize its overlap with a target gate $U_f \in \mathbb{U}(N)$, measured by the normalized trace-amplitude gate fidelity at $t_f$,
\begin{equation}
  \mathcal{F}(t_f) = \frac{1}{N}\left| \Tr \left[ U_f^{\dagger} U(t_f) \right] \right| .
\end{equation}
The fidelity error is $\varepsilon(t_f) = 1 - \mathcal{F}(t_f)$. Analogous state-transfer fidelities can be defined; we focus on gate fidelity.

\section{Structured Perturbations \label{sec:uncertainty_model}}

Consider a scalar uncertain parameter $\mu\in\mathbb R$ with nominal value $\mu_0$. Deviations from the nominal Hamiltonian are modeled by the structured term $(\mu-\mu_0)\hat{H}_\mu\alpha_{\hat{H}}^{(k)}$, where $\hat{H}_\mu$ is the Hermitian perturbation structure and $\alpha_{\hat{H}}^{(k)}\in\mathbb{R}$ its interval-dependent strength. The parameter $\mu$ is assumed to lie in some set $\Omega_\mu = [\underline{\mu},\overline{\mu}]$ with $\mu_0 \in \Omega_\mu$. The uncertain Hamiltonian in time-step $k$ is then
\begin{equation}\label{eq:perturbed_hamiltonian}
  \tilde{H}^{(k)}  = H_0 + \sum_{m=1}^{M} H_m f_{m}^{(k)} + \left( \mu-\mu_0 \right) \hat{H}_\mu\alpha_{\hat{H}}^{(k)}.
\end{equation}
Writing $\delta = \mu - \mu_0$, the strength $\alpha_{\hat{H}}^{(k)}$ lets the structure enter with interval-dependent weight. For collective uncertainty in the drift, $\hat{H}_\mu = H_0$ and $\alpha_{\hat{H}}^{(k)} = 1$; for multiplicative uncertainty in an interaction Hamiltonian $H_m$, take $\hat{H}_\mu = H_m$ and $\alpha_{\hat{H}}^{(k)} = f_{m}^{(k)}$. Collective uncertainty is chosen for ease of illustration; the analysis applies equally to a parameter affecting only select entries, $\tilde{H} = H_{\mathrm{nom}} + \delta\hat{H}_\mu$ with $H_{\mathrm{nom}} \neq \hat{H}_\mu$ and $\hat{H}_\mu$ sparse.

The perturbed solution to~\eqref{eq:schrodinger_gate} at $t_f$ is given by
\begin{subequations}
  \begin{align}
    \tilde{U}(t_f) &= \prod_{k=1}^{\tau}\tilde{U}^{(k)},\label{eq:perturbed_phi}\\
    \tilde{U}^{(k)} &= \exp\left[-\mathrm{i}\tilde{H}^{(k)} \Delta \right].\label{eq:perturbed_Phi2}
  \end{align}
\end{subequations}
The perturbed fidelity at parameter value $\mu$ is
\begin{equation}\label{eq:perturbed_error}
  {\mathcal{F}}_\mu(t_f) = \frac{1}{N} \left| \Tr\left[U_f^\dagger \tilde{U}(t_f) \right] \right|.
\end{equation}
We drop the explicit $t_f$ below; the fidelity is always measured at the readout time.

The same scalar parameter is shared across all control intervals, so it can represent a correlated physical parameter or calibration error. This differs from circuit-level models in which separate multiplicative errors scale the complete generators of individual gates~\cite{berberich2024}, and permits component-wise or sparse structures rather than scaling the whole interval Hamiltonian.

Throughout the following, the uncertainty model is restricted to scalar structured Hamiltonian perturbations, treated one parameter at a time. It does not cover simultaneous multi-parameter uncertainty, open-system (Lindblad) perturbations, or stochastic noise.

\section{Sensitivity of the Fidelity\label{sec:fidelity-sensitivity}}

Throughout this section we assume $\Tr(U_f^\dagger U(t_f))\neq 0$, equivalently $\mathcal{F}_{\mu_0}>0$, and define the nominal phase
\begin{equation}\label{eq:phase}
  \e^{\mathrm{i} \varphi} =
  \Tr\!\left(U_f^\dagger U(t_f)\right)
  \Big/ \bigl|\Tr\!\left(U_f^\dagger U(t_f)\right)\bigr|.
\end{equation}
Writing $z(\mu):=\Tr[U_f^\dagger \tilde{U}(t_f)]$, the derivative of the modulus at the nominal point is
\[
  \left.\frac{d}{d\mu}|z(\mu)|\right|_{\mu_0}
  = \Re\!\left[ \e^{-\mathrm{i}\varphi} \left.\frac{dz}{d\mu}\right|_{\mu_0} \right].
\]
The same argument applies at any parameter value with non-zero overlap by re-centering the phase at that value. The derivative of the perturbed unitary is computed as usual~\cite{oneil_2024_sensitivity_bounds}
\begin{equation}\label{eq:dU}
  \left. \frac{\partial \tilde{U}^{(k)} }{\partial \mu} \right|_{\mu_0}
  = -\mathrm{i} \Delta \int_{0}^1 e^{ -\mathrm{i} \Delta H^{(k)}(1-s)}
  \left. \frac{\partial \tilde{H}^{(k)}}{\partial \mu} \right|_{\mu_0}
  e^{-\mathrm{i} \Delta H^{(k)}s}\, ds,
\end{equation}
and we define
\begin{subequations}
  \begin{align}
    Z^{(k)} &:= U^{(k-1)} \hdots U^{(1)} U_f^\dagger U^{(\tau)} \hdots U^{(k+1)} U^{(k)} e^{-\mathrm{i} \varphi}, \\
    X_{\mu_0}^{(k)} &:={U^{(k)}}^\dagger \left. \frac{\partial \tilde{U}^{(k)}}{\partial \mu} \right|_{\mu_0} \nonumber\\
    &= -\mathrm{i} \Delta \int_{0}^1 e^{ \mathrm{i} \Delta H^{(k)}s}
    \left. \frac{\partial \tilde{H}^{(k)}}{\partial \mu} \right|_{\mu_0}
    e^{-\mathrm{i} \Delta H^{(k)}s}\, ds.\label{eq:Xb}
  \end{align}
\end{subequations}
Following earlier work~\cite{oneil_2024_sensitivity_bounds}, the sensitivity of the fidelity about the nominal value $\mu_0$ is
\begin{equation}\label{eq:sens}
  \zeta_{\mu_0} := \left. \frac{\partial {\mathcal{F}}_\mu}{\partial \mu} \right|_{\mu_0}
  = \frac{1}{N} \sum_{k=1}^{\tau} \Re\Tr\!\left[ Z^{(k)}X_{\mu_0}^{(k)} \right].
\end{equation}
Equivalently, writing
\begin{equation}\label{eq:product_derivative}
  D_{\mu_0}^{(k)}:=U^{(\tau)}U^{(\tau-1)} \hdots
  \left. \frac{\partial \tilde{U}^{(k)}}{\partial \mu} \right|_{\mu_0}
  \hdots U^{(1)},
\end{equation}
one has $\zeta_{\mu_0}=\frac{1}{N}\sum_{k=1}^{\tau}\Re\Tr\!\bigl(U_f^\dagger D_{\mu_0}^{(k)} e^{-\mathrm{i}\varphi}\bigr)$.
The equivalence between the $D_{\mu_0}^{(k)}$ and $Z^{(k)}X_{\mu_0}^{(k)}$ forms follows from
\[
  \left. \frac{\partial \tilde{U}^{(k)}}{\partial\mu} \right|_{\mu_0} = U^{(k)}X_{\mu_0}^{(k)}
\]
and cyclicity of the trace.

\begin{lemma}\label{lemma:X_in_uN}
  Suppose $\tilde{H}^{(k)}(\mu)$ is Hermitian for real $\mu$. Then $X_{\mu_0}^{(k)}\in\mathfrak{u}(N)$, i.e.\ $X_{\mu_0}^{(k)}$ is skew-Hermitian.
\end{lemma}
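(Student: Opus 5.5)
The plan is to work directly from the integral representation~\eqref{eq:Xb} of $X_{\mu_0}^{(k)}$ and show that the integrand is skew-Hermitian pointwise in $s$. Integration over $s\in[0,1]$ and multiplication by the real scalar $\Delta$ preserve skew-Hermiticity, because adjoint commutes with the (Riemann/Bochner) integral of a continuous matrix-valued function: $\bigl(\int_0^1 A(s)\,ds\bigr)^\dagger=\int_0^1 A(s)^\dagger\,ds$. The lemma then follows.

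First I record that $\partial\tilde H^{(k)}/\partial\mu|_{\mu_0}$ is Hermitian. From~\eqref{eq:perturbed_hamiltonian} it equals $\alpha_{\hat H}^{(k)}\hat H_\mu$ with $\alpha_{\hat H}^{(k)}\in\mathbb R$. More generally, under only the stated hypothesis that $\tilde H^{(k)}(\mu)$ is Hermitian for real $\mu$, the difference quotient $(\tilde H^{(k)}(\mu_0+h)-\tilde H^{(k)}(\mu_0))/h$ is Hermitian for real $h$. Its limit is therefore Hermitian, since the Hermitian matrices form a closed real subspace. Call this derivative $G$, so $G^\dagger=G$. Likewise $H^{(k)}=\tilde H^{(k)}(\mu_0)$ is Hermitian, so $V(s):=e^{-\mathrm{i}\Delta H^{(k)}s}$ is unitary for real $s$ and satisfies $V(s)^\dagger=e^{\mathrm{i}\Delta H^{(k)}s}$.

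Next I compute the adjoint of the integrand $A(s)=-\mathrm{i}\Delta\,V(s)^\dagger G V(s)$:
\[
A(s)^\dagger = \overline{(-\mathrm{i}\Delta)}\,V(s)^\dagger G^\dagger V(s) = \mathrm{i}\Delta\,V(s)^\dagger G V(s) = -A(s).
\]
Hence $A(s)\in\mathfrak u(N)$ for each $s$, and integrating gives $X_{\mu_0}^{(k)}\in\mathfrak u(N)$.

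An alternative route avoids the integral formula. Since $\tilde U^{(k)}(\mu)$ is unitary for all real $\mu$, differentiating $\tilde U^{(k)\dagger}\tilde U^{(k)}=I$ at $\mu_0$ gives $(\partial\tilde U^{(k)})^\dagger U^{(k)}+U^{(k)\dagger}\partial\tilde U^{(k)}=0$. This says exactly $X_{\mu_0}^{(k)\dagger}+X_{\mu_0}^{(k)}=0$. I will mention this as a check.

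The only point needing any care is the validity of the second line of~\eqref{eq:Xb}, namely that multiplying~\eqref{eq:dU} by $U^{(k)\dagger}=e^{\mathrm{i}\Delta H^{(k)}}$ converts $e^{-\mathrm{i}\Delta H^{(k)}(1-s)}$ into $e^{\mathrm{i}\Delta H^{(k)}s}$. This holds because the exponentials of the same matrix commute. Since the paper states this form, I take it as given, and there is no genuine obstacle.
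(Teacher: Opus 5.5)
Your proposal is correct and matches the paper's proof: the integrand in~\eqref{eq:Xb} is a unitary conjugate of the Hermitian matrix $\partial\tilde H^{(k)}/\partial\mu|_{\mu_0}$, so $-\mathrm{i}\Delta$ times its integral over $s$ is skew-Hermitian. Your extra justifications are valid but go beyond what the paper needs. These are the closed-subspace argument that the derivative is Hermitian under only the stated hypothesis, and the alternative check obtained by differentiating $\tilde U^{(k)\dagger}\tilde U^{(k)}=I$.
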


\begin{proof}
  Directly from~\eqref{eq:Xb}, for each $s\in[0,1]$ the integrand $e^{\mathrm{i}\Delta H^{(k)}s}\bigl(\partial\tilde{H}^{(k)}/\partial\mu|_{\mu_0}\bigr)e^{-\mathrm{i}\Delta H^{(k)}s}$ is Hermitian (unitary conjugation of a Hermitian matrix). Hence, $X_{\mu_0}^{(k)} = -\mathrm{i}\Delta$ times the integral of a Hermitian matrix, which is skew-Hermitian, i.e.\ $X_{\mu_0}^{(k)}\in\mathfrak{u}(N)$.
\end{proof}

\begin{lemma}\label{lemma:ZH_ZSH}
  For $Z^{(k)}=Z_H^{(k)}+Z_{SH}^{(k)}$, with
  \[
    Z_H^{(k)}=\frac{Z^{(k)}+{Z^{(k)}}^\dagger}{2},\qquad
    Z_{SH}^{(k)}=\frac{Z^{(k)}-{Z^{(k)}}^\dagger}{2},
  \]
  we have $\Tr \left( Z_H^{(k)} \right) =N\mathcal{F}$ and $\|Z_{SH}^{(k)}\|_F^2\le N(1-\mathcal{F}^2)$.
\end{lemma}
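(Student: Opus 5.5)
The plan is to observe first that $Z^{(k)}$ is unitary, and then reduce both claims to the trace and the Frobenius norm of $Z^{(k)}$.

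\emph{Step 1: unitarity and the trace.} By definition, $Z^{(k)}$ is a product of unitaries (the $U^{(j)}$ and $U_f^\dagger$) times the phase $e^{-\mathrm{i}\varphi}$, so $Z^{(k)}\in\mathbb{U}(N)$. By cyclicity of the trace,
\[
\Tr Z^{(k)} = e^{-\mathrm{i}\varphi}\Tr\bigl(U_f^\dagger U^{(\tau)}\cdots U^{(1)}\bigr) = e^{-\mathrm{i}\varphi}\Tr\bigl(U_f^\dagger U(t_f)\bigr).
\]
By the definition \eqref{eq:phase} of $\varphi$, this equals $\bigl|\Tr(U_f^\dagger U(t_f))\bigr| = N\mathcal{F}$, which is real. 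Hence $\Tr Z_H^{(k)} = \tfrac12\bigl(\Tr Z^{(k)}+\overline{\Tr Z^{(k)}}\bigr) = \Re\Tr Z^{(k)} = N\mathcal{F}$, which is the first claim. The same computation gives $\Tr Z_{SH}^{(k)} = \mathrm{i}\,\Im\Tr Z^{(k)} = 0$.

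\emph{Step 2: the Frobenius bound.} Write $Z=Z^{(k)}$. Expanding directly,
\[
\|Z_{SH}\|_F^2 = \tfrac14\Tr\bigl((Z-Z^\dagger)^\dagger(Z-Z^\dagger)\bigr) = \tfrac14\bigl(2\Tr(Z^\dagger Z) - \Tr(Z^2) - \Tr\bigl((Z^\dagger)^2\bigr)\bigr) = \tfrac12\bigl(N - \Re\Tr(Z^2)\bigr),
\]
using $Z^\dagger Z = I$. It therefore suffices to show $\Re\Tr(Z^2)\ge N(2\mathcal{F}^2-1)$.

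\emph{Step 3: the key inequality.} Diagonalize the unitary $Z$ with eigenvalues $e^{\mathrm{i}\theta_j}$. Then
\[
\Re\Tr(Z^2) = \sum_j\cos 2\theta_j = \sum_j\bigl(2\cos^2\theta_j - 1\bigr) = 2\sum_j\cos^2\theta_j - N.
\]
By Cauchy--Schwarz, $\sum_j\cos^2\theta_j \ge \tfrac1N\bigl(\sum_j\cos\theta_j\bigr)^2 = \tfrac1N\bigl(\Re\Tr Z\bigr)^2 = N\mathcal{F}^2$. Substituting gives $\Re\Tr(Z^2)\ge 2N\mathcal{F}^2 - N$, and therefore $\|Z_{SH}\|_F^2 \le \tfrac12\bigl(N - 2N\mathcal{F}^2 + N\bigr) = N(1-\mathcal{F}^2)$.

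An equivalent route uses Pythagoras: $\|Z_H\|_F^2+\|Z_{SH}\|_F^2 = \|Z\|_F^2 = N$, while $\|Z_H\|_F^2 \ge (\Tr Z_H)^2/N = N\mathcal{F}^2$ by Cauchy--Schwarz applied to the real eigenvalues of the Hermitian matrix $Z_H$.

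The only nontrivial step is this Cauchy--Schwarz lower bound on $\|Z_H\|_F^2$ (equivalently, on $\Re\Tr Z^2$). The remaining steps are routine bookkeeping with unitarity and the choice of phase.
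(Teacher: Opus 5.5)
Your proof is correct and is essentially the paper's argument. The paper uses the same cyclicity and phase computation to get $\Tr Z_H^{(k)}=N\mathcal{F}$, then the Pythagoras identity $\|Z_H^{(k)}\|_F^2+\|Z_{SH}^{(k)}\|_F^2=N$ plus Cauchy--Schwarz of $Z_H^{(k)}$ against $I$, which is exactly your ``equivalent route''; your main spectral computation is that same inequality written in the eigenbasis of $Z^{(k)}$, since $\tfrac12\bigl(N+\Re\Tr (Z^{(k)})^2\bigr)=\sum_j\cos^2\theta_j=\|Z_H^{(k)}\|_F^2$.
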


\begin{proof}
  By cyclicity,
  \[
    \Tr \left( Z^{(k)} \right) =
    \Tr\!\left( U_f^\dagger U(t_f)e^{-\mathrm{i}\varphi} \right) =
    \left| \Tr\!\left( U_f^\dagger U(t_f) \right) \right| =
    N\mathcal{F}.
  \]
  Thus, $\Tr \left( Z^{(k)} \right)$ is real. Since $\Tr \left( Z_{SH}^{(k)} \right)$ is purely imaginary, $\Tr \left( Z_H^{(k)} \right) =N\mathcal{F}$. Also, $Z^{(k)}$ is unitary, so $\|Z^{(k)}\|_F^2=N$, and the Hermitian and skew-Hermitian parts are orthogonal with respect to the real Hilbert--Schmidt inner product. Hence,
  \[
    \|Z_{SH}^{(k)}\|_F^2 = N-\|Z_H^{(k)}\|_F^2.
  \]
  By Cauchy--Schwarz applied to $I$ and $Z_H^{(k)}$,   $|\Tr Z_H^{(k)}|^2\le N\|Z_H^{(k)}\|_F^2$, so $\|Z_H^{(k)}\|_F^2\ge N\mathcal{F}^2$, giving $\|Z_{SH}^{(k)}\|_F^2\le N(1-\mathcal{F}^2)$.
\end{proof}

\begin{theorem}\label{thm:sens_bound}
  The size of the sensitivity $|\zeta_{\mu_0}|$ is bounded above by
  \[
    | \zeta_{\mu_0} | \leq \sqrt{\frac{1 - \mathcal{F}^2 }{N}} \sum_{k = 1}^\tau \left\| \left. \frac{\partial \tilde{U}^{(k)}}{\partial \mu} \right|_{\mu_0} \right\|_{F}.
  \]
\end{theorem}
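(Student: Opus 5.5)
The plan is to work term by term in the representation~\eqref{eq:sens}, $\zeta_{\mu_0}=\frac1N\sum_k \Re\Tr[Z^{(k)}X^{(k)}_{\mu_0}]$, and show that only the skew-Hermitian part of $Z^{(k)}$ contributes. Lemma~\ref{lemma:ZH_ZSH} then supplies the factor $\sqrt{N(1-\mathcal{F}^2)}$.

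\textbf{Step 1.} Split $Z^{(k)}=Z_H^{(k)}+Z_{SH}^{(k)}$ as in Lemma~\ref{lemma:ZH_ZSH}. By Lemma~\ref{lemma:X_in_uN}, $X^{(k)}_{\mu_0}$ is skew-Hermitian. For Hermitian $A$ and skew-Hermitian $B$,
\[
\overline{\Tr(AB)}=\Tr\bigl((AB)^\dagger\bigr)=\Tr(B^\dagger A^\dagger)=-\Tr(BA)=-\Tr(AB),
\]
so $\Tr(AB)$ is purely imaginary. Hence $\Re\Tr[Z_H^{(k)}X^{(k)}_{\mu_0}]=0$, and
\[
\Re\Tr[Z^{(k)}X^{(k)}_{\mu_0}]=\Re\Tr[Z_{SH}^{(k)}X^{(k)}_{\mu_0}].
\]
This cancellation is the key structural step. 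It is why the bound vanishes at $\mathcal{F}=1$.

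\textbf{Step 2.} Apply Cauchy--Schwarz in the Hilbert--Schmidt inner product:
\[
\bigl|\Re\Tr[Z_{SH}^{(k)}X^{(k)}_{\mu_0}]\bigr|\le \bigl|\Tr[Z_{SH}^{(k)}X^{(k)}_{\mu_0}]\bigr|\le \|Z_{SH}^{(k)}\|_F\,\|X^{(k)}_{\mu_0}\|_F .
\]
Lemma~\ref{lemma:ZH_ZSH} gives $\|Z_{SH}^{(k)}\|_F\le\sqrt{N(1-\mathcal{F}^2)}$. Since $X^{(k)}_{\mu_0}={U^{(k)}}^\dagger\,\partial\tilde U^{(k)}/\partial\mu|_{\mu_0}$ and the Frobenius norm is unitarily invariant, $\|X^{(k)}_{\mu_0}\|_F=\|\partial\tilde U^{(k)}/\partial\mu|_{\mu_0}\|_F$.

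\textbf{Step 3.} Sum over $k$ with the triangle inequality and collect the prefactor:
\[
|\zeta_{\mu_0}|\le \frac1N\sqrt{N(1-\mathcal{F}^2)}\sum_k\Bigl\|\tfrac{\partial\tilde U^{(k)}}{\partial\mu}\Big|_{\mu_0}\Bigr\|_F=\sqrt{\tfrac{1-\mathcal{F}^2}{N}}\sum_k\Bigl\|\tfrac{\partial\tilde U^{(k)}}{\partial\mu}\Big|_{\mu_0}\Bigr\|_F .
\]
This is exactly the claimed bound. The only genuine obstacle is Step~1: recognizing that the Hermitian part of $Z^{(k)}$ pairs with a skew-Hermitian $X^{(k)}_{\mu_0}$ to give a purely imaginary trace. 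This uses the nominal phase $\e^{\mathrm{i}\varphi}$ built into $Z^{(k)}$, which makes $\Tr Z^{(k)}$ real and is what lets Lemma~\ref{lemma:ZH_ZSH} apply. The remaining steps are routine.
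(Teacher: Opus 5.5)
Your proof is correct and follows the paper's argument essentially line for line: you drop the Hermitian part of $Z^{(k)}$ because its trace against the skew-Hermitian $X^{(k)}_{\mu_0}$ is purely imaginary, apply Cauchy--Schwarz, use Lemma~\ref{lemma:ZH_ZSH} to bound $\|Z_{SH}^{(k)}\|_F$, and use unitary invariance to replace $\|X^{(k)}_{\mu_0}\|_F$ by $\|\partial\tilde U^{(k)}/\partial\mu|_{\mu_0}\|_F$. The only cosmetic difference is that you keep the real part and bound $|\Re\Tr(\cdot)|\le|\Tr(\cdot)|$, whereas the paper observes that $\Tr(Z_{SH}^{(k)}X^{(k)}_{\mu_0})$ is already real.
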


\begin{proof}
  By~\eqref{eq:sens}, Lemma~\ref{lemma:X_in_uN}, and the decomposition $Z^{(k)}=Z_H^{(k)}+Z_{SH}^{(k)}$,
  \[
    \begin{aligned}
      |\zeta_{\mu_0}|
      &\le \frac{1}{N} \sum_{k=1}^{\tau} \left| \Re\Tr\!\left[ Z^{(k)}X_{\mu_0}^{(k)} \right] \right| \\
      &= \frac{1}{N} \sum_{k=1}^{\tau} \left| \Tr\!\left[ Z_{SH}^{(k)}X_{\mu_0}^{(k)} \right] \right| \\
      &\le \frac{1}{N} \sum_{k=1}^{\tau} \|Z_{SH}^{(k)}\|_F \|X_{\mu_0}^{(k)}\|_F \\
      &\le \sqrt{\frac{1-\mathcal{F}^2}{N}} \sum_{k=1}^{\tau} \left\| \left.
      \frac{\partial\tilde{U}^{(k)}}{\partial\mu} \right|_{\mu_0} \right\|_F .
    \end{aligned}
  \]
  The equality uses the facts that $Z_H^{(k)}$ is Hermitian and $X_{\mu_0}^{(k)}$ is skew-Hermitian, so $\Tr \left(Z_H^{(k)}X_{\mu_0}^{(k)} \right)$ is purely imaginary, while the product of two skew-Hermitian factors has real trace. The last line uses Lemma~\ref{lemma:ZH_ZSH} and unitary invariance of the Frobenius norm.
\end{proof}

\begin{corollary}
  For any scalar structured Hamiltonian perturbation of the form considered above with bounded Frobenius norm, the absolute sensitivity $|\zeta_{\mu_0}|$ vanishes when the nominal fidelity $\mathcal{F} = 1$.
\end{corollary}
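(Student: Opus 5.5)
The corollary follows from Theorem~\ref{thm:sens_bound} once we show that the sum on its right-hand side is finite. The plan is to check that the prefactor $\sqrt{(1-\mathcal{F}^2)/N}$ vanishes at $\mathcal{F}=1$. Then we bound each term $\|\partial\tilde{U}^{(k)}/\partial\mu|_{\mu_0}\|_F$ by a finite quantity, controlled by the Frobenius norm of the perturbation structure.

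First we bound each derivative term. By~\eqref{eq:dU} and the perturbed Hamiltonian~\eqref{eq:perturbed_hamiltonian}, we have $\partial\tilde{H}^{(k)}/\partial\mu = \alpha_{\hat{H}}^{(k)}\hat{H}_\mu$. The integrand in~\eqref{eq:dU} is a product of two unitaries around this Hermitian matrix, and the Frobenius norm is unitarily invariant. Moving the norm inside the integral therefore gives
\[
  \left\|\left.\frac{\partial\tilde{U}^{(k)}}{\partial\mu}\right|_{\mu_0}\right\|_F \le \Delta\,\bigl|\alpha_{\hat{H}}^{(k)}\bigr|\,\|\hat{H}_\mu\|_F .
\]
We use the triangle inequality for the Bochner/Riemann integral in the Frobenius norm. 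This is finite because the strengths $\alpha_{\hat{H}}^{(k)}$ are real numbers over finitely many intervals $\tau$, and $\hat{H}_\mu$ has bounded Frobenius norm by hypothesis. Equivalently, we could use $\|X_{\mu_0}^{(k)}\|_F$ directly, since it equals the same quantity by unitary invariance.

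Next we substitute this bound into Theorem~\ref{thm:sens_bound}:
\[
  |\zeta_{\mu_0}| \le \sqrt{\tfrac{1-\mathcal{F}^2}{N}}\;\Delta\,\|\hat{H}_\mu\|_F\sum_{k=1}^{\tau}\bigl|\alpha_{\hat{H}}^{(k)}\bigr|.
\]
At $\mathcal{F}=1$ the prefactor is zero and the remaining factor is finite, so $|\zeta_{\mu_0}|\le 0$, hence $\zeta_{\mu_0}=0$. The standing assumption $\mathcal{F}_{\mu_0}>0$ holds, so the phase $\varphi$ and Theorem~\ref{thm:sens_bound} are well defined at this point.

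No step here is a real obstacle. The only point needing care is that "bounded Frobenius norm" should be read as finiteness of the full sum $\sum_k|\alpha_{\hat{H}}^{(k)}|\,\|\hat{H}_\mu\|_F$, not merely of $\|\hat{H}_\mu\|_F$. This is automatic since $\tau<\infty$. As a sanity check, at $\mathcal{F}=1$ we get $Z^{(k)}=I$, so $Z_{SH}^{(k)}=0$ by Lemma~\ref{lemma:ZH_ZSH}, which directly gives the vanishing sensitivity.
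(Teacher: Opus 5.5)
Your proposal is correct and follows the route the paper intends. The paper gives no separate proof: the corollary is read off from Theorem~\ref{thm:sens_bound}, whose prefactor $\sqrt{(1-\mathcal{F}^2)/N}$ vanishes at $\mathcal{F}=1$ while the sum stays finite, and your per-interval bound $\Delta|\alpha_{\hat{H}}^{(k)}|\,\|\hat{H}_\mu\|_F$ is the estimate the paper later records in Lemma~\ref{lemma:dU_bound}. Your sanity check also holds, since $\mathcal{F}=1$ forces $U_f^\dagger U(t_f)=\e^{\mathrm{i}\varphi}I$ and hence $Z^{(k)}=I$.
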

This extends~\cite[Lemma~3]{oneil_geometric}, proved for static controls, to piecewise-constant controls.

\begin{remark}
  Considering the logarithmic sensitivity of the fidelity error $\varepsilon = 1-\mathcal{F}$ with respect to the scalar perturbation parameter,
  \[
    \left| \frac{\partial \log \varepsilon}{\partial \mu} \right|
    = \frac{|\zeta_{\mu_0}|}{\varepsilon},
  \]
  and noting that at near-perfect fidelity $\sqrt{1-\mathcal{F}^2}\approx\sqrt{2\varepsilon}$, the  corresponding bound scales as $O(\varepsilon^{-1/2})$. This is consistent with the divergence of the log-sensitivity of the fidelity error for static control fields observed in~\cite{oneil_2024_log_sens}.
\end{remark}

\section{Robustness Margins\label{sec:robustness-margins}}

We now use the sensitivity expression~\eqref{eq:sens} and Theorem~\ref{thm:sens_bound} to certify finite deviations of $\mu$ that meet the fidelity-threshold condition $\mathcal{F}_\mu\ge\mathcal{F}_T$, with strict inequality inside the connected safe component. As before, $\mu$ denotes a general uncertain parameter with nominal value $\mu_0$, and $\mathcal{F}_\mu$ the fidelity evaluated at the parameter value $\mu$. We present the argument in four steps: the connected safe set $\mathcal{I}$; a uniform sensitivity bound on $\mathcal{I}$; a Lipschitz constant and certified safe radius; and an iterative one-dimensional margin computation.

\paragraph{Safe set}
We write $\mathcal{F} = \mathcal{F}_{\mu_0}$ for the nominal fidelity and assume $0<\mathcal{F}_T<\mathcal{F}_{\mu_0}$, so only controllers meeting the minimum nominal performance are considered; on $\{\mu:\mathcal{F}_\mu>\mathcal{F}_T\}$ the absolute-value fidelity is then differentiable. Continuity of $\mathcal{F}_\mu$ in $\mu$ follows as in~\cite{oneil_2024_sensitivity_bounds}, from continuity of each $\tilde{H}^{(k)}$ together with that of the exponential map, matrix multiplication, the trace, and the absolute value. Define $\mathcal{I}$ as the connected component containing $\mu_0$ of $\{\mu \in \Omega_\mu : \mathcal{F}_\mu > \mathcal{F}_T\}$, a relatively open subinterval of $\Omega_\mu$.

\paragraph{Uniform sensitivity bound on $\mathcal{I}$}
It follows that for all $\mu \in \mathcal{I}$, $\sqrt{1-\mathcal{F}_\mu^2} \leq  \sqrt{1-\mathcal{F}_T^2}$, and therefore
\begin{equation}\label{eq:bound_2}
  | \zeta_\mu | \leq \left( \sqrt{\frac{1 - \mathcal{F}_T^2}{N}}  \right) \sum_{k = 1}^\tau \left\| \frac{\partial \tilde{U}^{(k)}}{\partial \mu} \right\|_F=: B_T f(\mu).
\end{equation}
The derivatives $\frac{\partial \tilde{U}^{(k)}}{\partial \mu}$ are evaluated at the same $\mu \in \mathcal{I}$, so $f$ inherits the $\mu$-dependence of the perturbed propagators and is not available in closed form. We therefore seek a bound $C_{\hat{H}}$, determined by the perturbation structure and independent of $\mu$ for the models considered here, with $f(\mu) \leq C_{\hat{H}}$ for all $\mu \in \mathcal{I}$.

\begin{lemma}\label{lemma:dU_bound}
  The Frobenius norm of $\frac{\partial \tilde{U}^{(k)}}{\partial \mu}$ is bounded from above by $\left\| \frac{\partial \tilde{U}^{(k)}}{\partial \mu}\right\|_F \leq \Delta \left\|  \frac{\partial \tilde{H}^{(k)}}{\partial \mu} \right\|_F$.
\end{lemma}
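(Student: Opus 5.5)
The plan is to work directly from the integral representation~\eqref{eq:dU} of the derivative of the interval propagator, evaluated now at an arbitrary $\mu$ rather than only at $\mu_0$. For a general $\mu$ the same Duhamel/Fr\'echet-derivative formula holds with $H^{(k)}$ replaced by $\tilde{H}^{(k)}(\mu)$:
\[
  \frac{\partial \tilde{U}^{(k)}}{\partial \mu}
  = -\mathrm{i} \Delta \int_{0}^1 e^{ -\mathrm{i} \Delta \tilde{H}^{(k)}(1-s)}
  \frac{\partial \tilde{H}^{(k)}}{\partial \mu}
  e^{-\mathrm{i} \Delta \tilde{H}^{(k)}s}\, ds .
\]
I will first note this extension. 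It is the standard derivative of the matrix exponential $\frac{d}{d\mu}e^{A(\mu)}=\int_0^1 e^{(1-s)A}A'e^{sA}\,ds$ with $A=-\mathrm{i}\Delta\tilde{H}^{(k)}(\mu)$, exactly as cited from~\cite{oneil_2024_sensitivity_bounds}. The derivation never uses $\mu=\mu_0$.

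Next I take the Frobenius norm and move it inside the integral. The Frobenius norm is a norm on a finite-dimensional space, and the integrand is continuous in $s$, so the triangle inequality for Bochner/Riemann integrals gives
\[
\Bigl\|\frac{\partial \tilde{U}^{(k)}}{\partial \mu}\Bigr\|_F \le \Delta\int_0^1 \Bigl\| e^{-\mathrm{i}\Delta\tilde{H}^{(k)}(1-s)}\,\frac{\partial \tilde{H}^{(k)}}{\partial \mu}\,e^{-\mathrm{i}\Delta\tilde{H}^{(k)}s}\Bigr\|_F\,ds .
\]
The factor $|-\mathrm{i}|=1$ drops out here.

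The key step is unitary invariance. Because $\tilde{H}^{(k)}(\mu)$ is Hermitian for real $\mu$, both exponentials are unitary. The Frobenius norm is invariant under left and right multiplication by unitaries, since $\|VAW\|_F^2=\Tr(W^\dagger A^\dagger V^\dagger V A W)=\Tr(A^\dagger A)$. So the integrand norm equals $\|\partial\tilde{H}^{(k)}/\partial\mu\|_F$ for every $s$. Integrating this constant over $[0,1]$ gives the claimed bound $\Delta\,\|\partial\tilde{H}^{(k)}/\partial\mu\|_F$.

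I expect no real obstacle. The only point needing care is justifying that the derivative formula holds at general $\mu\in\mathcal{I}$, not just at $\mu_0$, and that the integrand is integrable. Both follow from analyticity of the exponential and affine dependence of $\tilde{H}^{(k)}$ on $\mu$. For the affine model~\eqref{eq:perturbed_hamiltonian}, the bound also specializes immediately: $\partial\tilde{H}^{(k)}/\partial\mu=\alpha_{\hat{H}}^{(k)}\hat{H}_\mu$, independent of $\mu$. This gives the $\mu$-independent constant $C_{\hat{H}}=\Delta\|\hat{H}_\mu\|_F\sum_k|\alpha_{\hat{H}}^{(k)}|$ sought in the text.
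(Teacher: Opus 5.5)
Your proof is correct and follows the paper's argument exactly: the Duhamel integral representation with $\tilde{H}^{(k)}$ in place of $H^{(k)}$, then the triangle inequality for the integral and unitary invariance of the Frobenius norm. Your closing constant uses $\hat{H}_\mu$ rather than the paper's centered $\overline{\hat{H}}_\mu$; that bound is still valid, but it can be looser.
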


\begin{proof}
  The integral representation for $\partial\tilde{U}^{(k)}/\partial\mu$ is~\eqref{eq:dU} with $\tilde{H}^{(k)}$ in place of $H^{(k)}$. Its exponential factors are unitary for all $\mu\in\mathcal{I}$, so the claim follows from the triangle inequality and unitary invariance of the Frobenius norm.
\end{proof}

Since the trace-amplitude fidelity is invariant under global phase, we center the perturbation structure, $\overline{\hat{H}}_\mu := \hat{H}_\mu - N^{-1}(\Tr \hat{H}_\mu) I$. This changes the propagator only by a parameter-dependent global phase, so $\mathcal{F}_\mu$ is unchanged and the bound can only tighten. For~\eqref{eq:perturbed_hamiltonian} we obtain
\[
  C_{\hat{H}} =
  \Delta\sum_{k=1}^{\tau} \bigl|\alpha_{\hat{H}}^{(k)}\bigr|\,\|\overline{\hat{H}}_\mu\|_F,
\]
such that $f(\mu)\le C_{\hat{H}}$ on $\mathcal{I}$. If $C_{\hat{H}}=0$, then $\mathcal{F}_\mu=\mathcal{F}_{\mu_0}$ throughout $\Omega_\mu$ and the margin is limited only by the parameter domain; henceforth we assume $C_{\hat{H}}>0$. Writing $\overline{H}_m = H_m - N^{-1}(\Tr H_m) I$, this gives $C_{\hat{H}}=t_f\|\overline{H}_0\|_F$ for drift uncertainty and $C_{\hat{H}}=\Delta\|f_m\|_{\ell^1}\|\overline{H}_m\|_F$ for uncertainty in the $m$th interaction Hamiltonian. The case-study values are unchanged, since the structures $H_0$, $H_1$, $H_2$ are traceless.

\paragraph{Lipschitz bound and certified safe radius}
Although $\mathcal{F}_\mu$ is real analytic away from zero overlap, the certificate requires only a uniform Lipschitz bound on the connected safe component.
\begin{lemma}\label{lemma:lipschitz}
  For any $\mu_a,\mu_b\in\mathcal{I}$,
  \[
    |\mathcal{F}_{\mu_b}-\mathcal{F}_{\mu_a}| \le L_{\hat{H}}|\mu_b-\mu_a|,
  \]
  where $L_{\hat{H}}:=B_T C_{\hat{H}}$ and $B_T=\sqrt{(1-\mathcal{F}_T^2)/N}$.
\end{lemma}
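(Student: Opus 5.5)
The plan is to apply the mean value theorem to $\mu\mapsto\mathcal{F}_\mu$ on the segment between $\mu_a$ and $\mu_b$, bounding the derivative pointwise by $L_{\hat{H}}$.

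\emph{Step 1: the segment stays in $\mathcal{I}$.} Since $\mathcal{I}$ is a connected subset of $\mathbb{R}$, it is an interval. Hence the closed segment $[\min(\mu_a,\mu_b),\max(\mu_a,\mu_b)]$ lies in $\mathcal{I}$. On this segment $\mathcal{F}_\mu>\mathcal{F}_T>0$, so $z(\mu)\neq 0$. Therefore $\mu\mapsto|z(\mu)|$ is differentiable: $z$ is a composition of smooth maps (the exponential, products, and the trace), and the modulus is smooth away from $0$. So $\mathcal{F}_\mu$ is continuous on the closed segment and differentiable on its interior, and the mean value theorem yields some $\xi$ strictly between $\mu_a$ and $\mu_b$ with $\mathcal{F}_{\mu_b}-\mathcal{F}_{\mu_a}=\zeta_\xi(\mu_b-\mu_a)$. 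If $\mu_a=\mu_b$ there is nothing to prove.

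\emph{Step 2: the pointwise derivative bound at $\xi$.} I would re-apply Theorem~\ref{thm:sens_bound} with $\xi$ as the nominal point, which the text preceding Lemma~\ref{lemma:X_in_uN} explicitly permits. Concretely, replace $\mu_0$ by $\xi$, the nominal propagators by $\tilde U^{(k)}(\xi)$, and the phase $\varphi$ by the phase of $z(\xi)$. Lemmas~\ref{lemma:X_in_uN} and~\ref{lemma:ZH_ZSH} hold verbatim at $\xi$, because $\tilde H^{(k)}(\xi)$ is Hermitian and the recentred $Z^{(k)}$ is unitary with trace $N\mathcal{F}_\xi$. This gives
\[
|\zeta_\xi|\le\sqrt{\tfrac{1-\mathcal{F}_\xi^2}{N}}\sum_{k}\bigl\|\partial_\mu\tilde U^{(k)}(\xi)\bigr\|_F .
\]
Next, $\xi\in\mathcal{I}$ gives $\mathcal{F}_\xi>\mathcal{F}_T$, so $\sqrt{1-\mathcal{F}_\xi^2}\le\sqrt{1-\mathcal{F}_T^2}$, which is~\eqref{eq:bound_2}. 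By Lemma~\ref{lemma:dU_bound} each summand is at most $\Delta\|\partial_\mu\tilde H^{(k)}\|_F=\Delta|\alpha^{(k)}_{\hat H}|\,\|\hat H_\mu\|_F$.

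\emph{Step 3: tightening to the centred structure.} Replacing $\hat H_\mu$ by $\overline{\hat H}_\mu$ multiplies $\tilde U(t_f)$ by the global phase $\exp\!\bigl(\mathrm{i}\,c\,(\mu-\mu_0)\bigr)$ with $c=-\Delta N^{-1}\Tr\hat H_\mu\sum_k\alpha^{(k)}_{\hat H}$. This leaves $\mathcal{F}_\mu$, and hence $\zeta_\xi$ and the set $\mathcal{I}$, unchanged. So Steps 1--2 can be run for the centred model, giving $f(\xi)\le C_{\hat H}$ and therefore $|\zeta_\xi|\le B_TC_{\hat H}=L_{\hat H}$. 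Substituting into the mean value identity gives the claim.

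\emph{Main obstacle.} The delicate point is justifying Theorem~\ref{thm:sens_bound} at an arbitrary $\xi\in\mathcal{I}$ rather than at $\mu_0$. This needs differentiability of the modulus there, which holds because the overlap is nonzero on $\mathcal{I}$, and it needs the phase recentring in $Z^{(k)}$. I would state this recentring explicitly as a remark rather than rederive the lemmas.
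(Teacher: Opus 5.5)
Your proof is correct and follows the paper's argument. You use the fact that $\mathcal{I}$ is an interval, bound $|\zeta_\xi|\le B_T C_{\hat{H}}$ pointwise on $\mathcal{I}$ via~\eqref{eq:bound_2} and $f\le C_{\hat{H}}$, and transfer this to the increment; the mean value theorem simply replaces the paper's integral $\int|\zeta_\xi|\,d\xi$, and your explicit phase recentring at $\xi$ and centred-structure reduction spell out what the paper asserts in passing. The sign of your $c$ is flipped, which is immaterial since only a global phase is involved.
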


\begin{proof}
  Since $\mathcal{I}$ is an interval, the segment between $\mu_a$ and $\mu_b$ lies in $\mathcal{I}$, so
  \[
    |\mathcal{F}_{\mu_b}-\mathcal{F}_{\mu_a}| \le
    \int_{\min(\mu_a,\mu_b)}^{\max(\mu_a,\mu_b)} |\zeta_\xi|\,d\xi .
  \]
  By~\eqref{eq:bound_2}, $|\zeta_\xi|\le B_T C_{\hat{H}}$ on $\mathcal{I}$. Hence,
  \[
    |\mathcal{F}_{\mu_b}-\mathcal{F}_{\mu_a}|
    \le B_T C_{\hat{H}}|\mu_b-\mu_a|
    = L_{\hat{H}}|\mu_b-\mu_a|.
  \]
\end{proof}

This gives an explicit safety certificate: parameter values strictly inside the radius remain in the connected safe component, and by continuity the threshold condition still holds on its boundary.

\begin{theorem}\label{thm:safe_radius}
  Let $\nu\in\mathcal{I}$ and define
  \[ r_\nu=\frac{\mathcal{F}_{\nu}-\mathcal{F}_T}{L_{\hat{H}}}. \]
  For any $\mu_1\in\Omega_\mu$ satisfying
  \[ |\mu_1-\nu|\le r_\nu, \]
  the performance criterion $\mathcal{F}_{\mu_1}\ge\mathcal{F}_T$ holds. Moreover, if $|\mu_1-\nu|<r_\nu$, then $\mu_1\in\mathcal{I}$, and hence $\mathcal{F}_{\mu_1}>\mathcal{F}_T$.
\end{theorem}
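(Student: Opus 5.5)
The plan is a continuity/connectedness argument built on Lemma~\ref{lemma:lipschitz}. I would prove the strict statement first and then obtain the closed-radius statement by a limiting argument. Fix $\nu\in\mathcal{I}$; since $\mathcal{F}_\nu>\mathcal{F}_T$ and $L_{\hat{H}}>0$, we have $r_\nu>0$. Let $J$ be the set of $\mu\in\Omega_\mu$ with $|\mu-\nu|<r_\nu$. It is an interval containing $\nu$, relatively open in $\Omega_\mu$. The goal is $J\subseteq\mathcal{I}$.

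For the inclusion I would argue by contradiction. Suppose some $\mu_1\in J$ lies outside $\mathcal{I}$, and say $\mu_1>\nu$ (the other side is symmetric). Set $\mu^*:=\sup\{\mu\in[\nu,\mu_1]:[\nu,\mu]\subseteq\mathcal{I}\}$. Since $\mathcal{I}$ is an interval containing $\nu$ but not $\mu_1$, we have $[\nu,\mu^*)\subseteq\mathcal{I}$ and $\mu^*\notin\mathcal{I}$. The point $\mu^*$ lies in $[\nu,\mu_1]\subseteq J\subseteq\Omega_\mu$. Because $\mathcal{I}$ is a connected component of the relatively open set $\{\mathcal{F}_\mu>\mathcal{F}_T\}$, a point of $\Omega_\mu$ adjacent to $\mathcal{I}$ that is not in $\mathcal{I}$ must satisfy $\mathcal{F}_{\mu^*}\le\mathcal{F}_T$. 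Otherwise, by continuity of $\mathcal{F}_\mu$, a neighborhood of $\mu^*$ would lie in the superlevel set and join the component. Combining continuity with Lemma~\ref{lemma:lipschitz} on $[\nu,\mu)$ and letting $\mu\uparrow\mu^*$ gives
\[
\mathcal{F}_{\mu^*}\ge\mathcal{F}_\nu-L_{\hat{H}}(\mu^*-\nu)>\mathcal{F}_\nu-L_{\hat{H}}r_\nu=\mathcal{F}_T,
\]
which is a contradiction. Hence $J\subseteq\mathcal{I}$, and every $\mu_1\in\Omega_\mu$ with $|\mu_1-\nu|<r_\nu$ satisfies $\mathcal{F}_{\mu_1}>\mathcal{F}_T$.

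For the boundary case $|\mu_1-\nu|=r_\nu$ with $\mu_1\in\Omega_\mu$, the segment from $\nu$ to $\mu_1$ lies in $\Omega_\mu$ because $\Omega_\mu$ is an interval. Every interior point of this segment lies in $J\subseteq\mathcal{I}$, so Lemma~\ref{lemma:lipschitz} gives $\mathcal{F}_\mu\ge\mathcal{F}_\nu-L_{\hat{H}}|\mu-\nu|$ there. Letting $\mu\to\mu_1$ and using continuity yields $\mathcal{F}_{\mu_1}\ge\mathcal{F}_\nu-L_{\hat{H}}r_\nu=\mathcal{F}_T$.

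The main obstacle is the step showing that the first exit point $\mu^*$ has $\mathcal{F}_{\mu^*}\le\mathcal{F}_T$. This is where the connected-component definition, relative openness in $\Omega_\mu$, and continuity of $\mathcal{F}_\mu$ all have to be used carefully. A second point of care: Lemma~\ref{lemma:lipschitz} applies only within $\mathcal{I}$, so it may be invoked only on the half-open segment, and continuity must carry the estimate to the endpoint.
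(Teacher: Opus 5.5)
Your proof is correct and follows essentially the same route as the paper: a first-exit (supremum) argument along the segment from $\nu$, Lemma~\ref{lemma:lipschitz} on the half-open piece plus continuity to get $\mathcal{F}_{\mu^*}>\mathcal{F}_T$ at the exit point, maximality and relative openness of the component $\mathcal{I}$ to reach a contradiction, and the same limiting argument for $|\mu_1-\nu|=r_\nu$. The paper argues directly (it shows $\lambda^\star=1$) where you argue by contradiction, which is only cosmetic; note, though, that your claim $\mu^*\notin\mathcal{I}$ relies on the relative openness of $\mathcal{I}$ in $\Omega_\mu$, not merely on $\mathcal{I}$ being an interval.
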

\begin{proof}
  First suppose $|\mu_1-\nu|<r_\nu$, and let
  \[ \gamma(\lambda)=\nu+\lambda(\mu_1-\nu),\qquad \lambda\in[0,1]. \]
  Starting from $\gamma(0)=\nu\in\mathcal{I}$, define
  \[ \lambda^\star=\sup\bigl\{\lambda\in[0,1]:\gamma([0,\lambda))\subset\mathcal{I}\bigr\}. \]
  For every $\lambda<\lambda^\star$ the segment $\gamma([0,\lambda])$ lies in $\mathcal{I}$, so the lower half of Lemma~\ref{lemma:lipschitz} gives, for either sign of $\mu_1-\nu$,
  \[ \mathcal{F}_{\gamma(\lambda)}
    \ge \mathcal{F}_{\nu} - L_{\hat{H}}\lambda|\mu_1-\nu|
  > \mathcal{F}_{\nu} - L_{\hat{H}}r_\nu = \mathcal{F}_T, \]
  using $|\mu_1-\nu|<r_\nu$. If $\lambda^\star<1$, then by continuity of $\mathcal{F}_\mu$ the same estimate passes to the limit,
  \[ \mathcal{F}_{\gamma(\lambda^\star)}
    \ge \mathcal{F}_{\nu} - L_{\hat{H}}\lambda^\star|\mu_1-\nu|
  > \mathcal{F}_T. \]
  Because $\mathcal{F}_{\gamma(\lambda^\star)}>\mathcal{F}_T$, the connected set $\gamma([0,\lambda^\star])$ lies in the safe set and intersects $\mathcal{I}$; hence it is contained in $\mathcal{I}$. Since $\mathcal{I}$ is relatively open, $\gamma(\lambda)\in\mathcal{I}$ for some $\lambda>\lambda^\star$, contradicting the definition of $\lambda^\star$. Hence $\lambda^\star=1$. The same estimate and limiting argument give $\mathcal{F}_{\gamma(1)}>\mathcal{F}_T$, so $\gamma([0,1])$ is likewise a connected subset of the safe set meeting $\mathcal{I}$ and is contained in it; thus $\mu_1=\gamma(1)\in\mathcal{I}$ and $\mathcal{F}_{\mu_1}>\mathcal{F}_T$.

  If $|\mu_1-\nu|=r_\nu$, then $\mu_1$ may lie on $\partial\mathcal{I}$, where Lemma~\ref{lemma:lipschitz} does not apply; take points on the same segment converging to $\mu_1$ from strict distance less than $r_\nu$. The strict case and continuity of $\mathcal{F}_\mu$ give $\mathcal{F}_{\mu_1}\ge\mathcal{F}_T$.
\end{proof}

\begin{algorithm}[!t]
  \caption{Calculation of Robustness Margin $\mathcal{M}$}\label{algorithm}
  \algnewcommand{\LineComment}[1]{\State $\triangleright$ #1}
  \begin{algorithmic}[1]
    \Require{fidelity threshold $\mathcal{F}_T$, Lipschitz bound $L_{\hat{H}}$, nominal parameter value $\mu_0$, fidelity-band tolerance $\eta$, domain $\Omega_\mu = [\underline{\mu},\overline{\mu}]$, maximum number of steps $K_{\max}$ per direction.}
    \For {$s\in\{-1,+1\}$} \Comment{$s=\pm1$: decreasing/increasing}
    \State{$k \gets 1$; $\nu \gets \mu_0$} \Comment{$k$: steps taken; $\nu$: working iterate}
    \State{$\mathrm{done} \gets \mathbf{false}$; $\mathrm{status}_{s} \gets \textsc{unset}$}
    \While{not $\mathrm{done}$}
    \State{$\nu^{+} \gets \nu + s(\mathcal{F}_{\nu}-\mathcal{F}_T)/L_{\hat{H}}$}
    \State{Clamp $\nu^{+}$ to $\Omega_\mu$}
    \State{Evaluate fidelity $\mathcal{F}_{\nu^{+}}$ at $\nu^{+}$}
    \If{numerical evaluation gives $\mathcal{F}_{\nu^{+}} < \mathcal{F}_T$}
    \LineComment{Numerical safeguard only}
    \State{Bisect $[\nu, \nu^{+}]$ and}
    \State{\hspace*{1.2em} update $\nu^{+}$ such that $\mathcal{F}_{\nu^{+}}\ge\mathcal{F}_T$}
    \EndIf
    \If{$\nu^{+} \in \partial \Omega_\mu$ and $\mathcal{F}_{\nu^{+}} - \mathcal{F}_T \geq \eta$}
    \LineComment{Domain-truncated: margin $\geq$ distance to $\partial\Omega_\mu$}
    \State{$M_{s} \gets |\mu_0 - \nu^{+}|$}
    \State{$\mathrm{status}_{s} \gets \textsc{domain-trunc}$; $\mathrm{done} \gets \mathbf{true}$}
    \ElsIf{$0 \le \mathcal{F}_{\nu^{+}}-\mathcal{F}_T < \eta$}
    \State{$M_{s} \gets \left| \mu_0 - \nu^{+} \right|$}
    \State{$\mathrm{status}_{s} \gets \textsc{eta-band}$; $\mathrm{done} \gets \mathbf{true}$}
    \ElsIf{$k \geq K_{\max}$}
    \LineComment{Certified lower bound only}
    \State{$M_{s} \gets \left| \mu_0 - \nu^{+} \right|$}
    \State{$\mathrm{status}_{s} \gets \textsc{iteration-limit}$; $\mathrm{done} \gets \mathbf{true}$}
    \Else
    \State{$\nu \gets \nu^{+}$; $k \gets k+1$}
    \EndIf
    \EndWhile
    \EndFor
    \State{\Return $\mathcal{M} = \min\{M_{-1},M_{+1}\}$ with the directional statuses $\mathrm{status}_{-1},\mathrm{status}_{+1}$}
  \end{algorithmic}
\end{algorithm}

\paragraph{One-dimensional iterative margin}
A single application of the Lipschitz radius is typically conservative~\cite{schirmer2024,oneil_2024_sensitivity_bounds}, so Algorithm~\ref{algorithm} repeats the certified update along each direction. The constant $L_{\hat{H}}=B_T C_{\hat{H}}$ is fixed by the threshold and the structure; only the step length shrinks as $\mathcal{F}_\nu\to\mathcal{F}_T$. In exact arithmetic,
\begin{equation}
  \nu^{+} = \nu \pm \frac{\mathcal{F}_{\nu}-\mathcal{F}_T}{L_{\hat{H}}}
\end{equation}
cannot cross below the threshold: every update is covered by Theorem~\ref{thm:safe_radius}. If the nominal safe component has a finite boundary, the resulting monotone sequence remains inside that component and approaches the boundary, although it need not reach it in finitely many steps. The iteration stops when $0\le\mathcal{F}_{\nu^{+}}-\mathcal{F}_T<\eta$, when the boundary of $\Omega_\mu$ is reached while remaining safe, or when the step limit $K_{\max}$ is reached. Bisection is used only as a safeguard when finite-precision or approximate fidelity evaluation reports $\mathcal{F}_{\nu^{+}}<\mathcal{F}_T$.

The returned directional value $M_s$ is therefore a certified lower bound on the distance to a violating perturbation in direction $s$. The quantity $\eta$ is a tolerance in fidelity space: it does not provide a two-sided parameter-space bracket or a prescribed relative accuracy in $M_s$. The status $\mathrm{status}_s$ records \emph{which} Algorithm~\ref{algorithm} stopping rule was reached; it is not by itself an error estimate for the margin. This distinction matters because a \textsc{domain-truncated} result certifies only that the margin is at least the distance to $\partial\Omega_\mu$, and must not be read as a resolved margin. We report
\begin{equation}
  \mathcal{M} = \min\{M_{-1},M_{+1}\},
\end{equation}
which certifies the symmetric interval $|\mu-\mu_0|\le\mathcal{M}$ within the nominal connected safe component. No claim is made about disconnected safe intervals beyond an earlier threshold crossing.

\begin{figure*}[!t]
  \centering
  \begin{minipage}{0.31\linewidth}
    \centering
    \includegraphics[width=\linewidth]{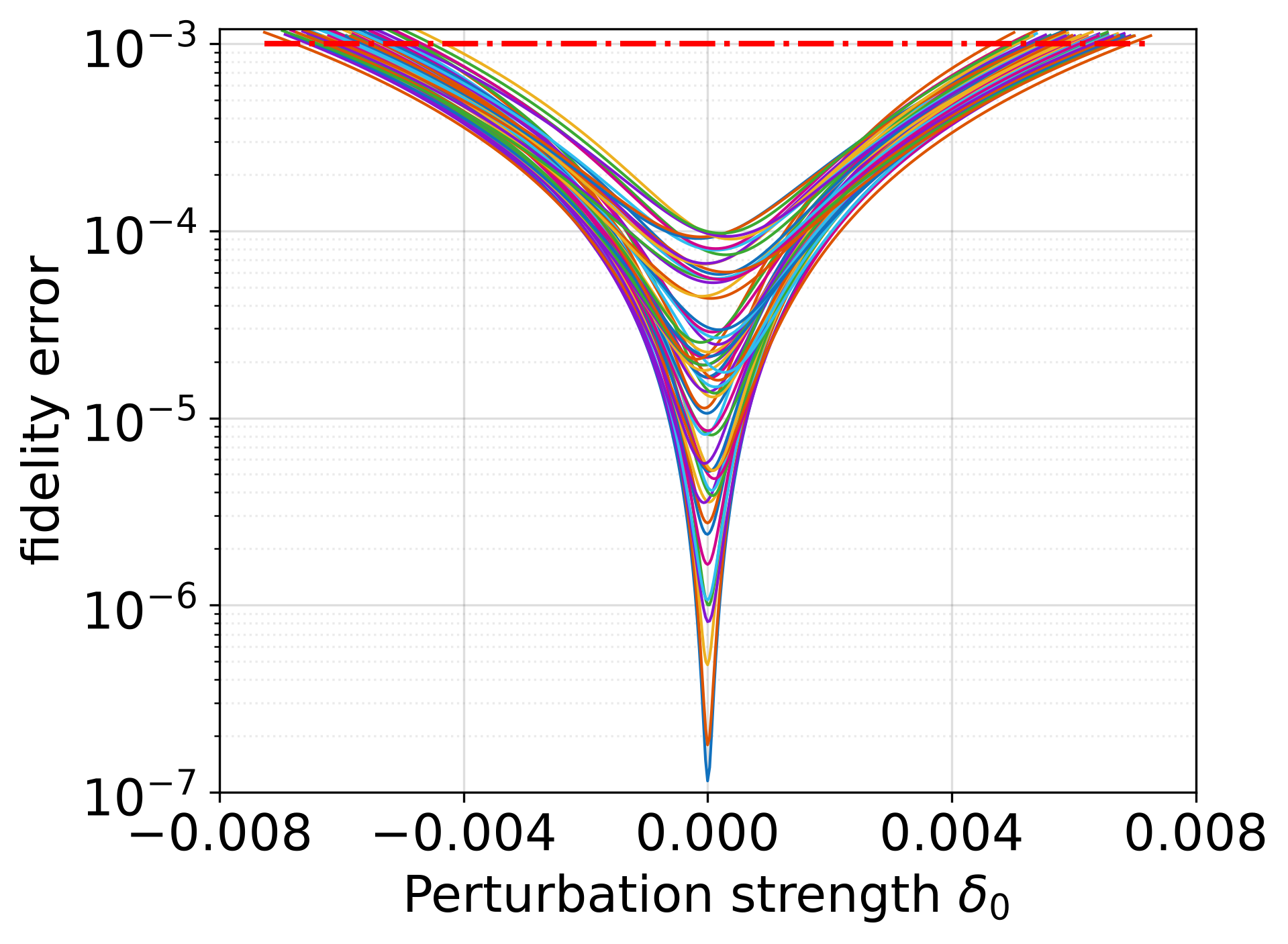}
    \\ \footnotesize (a) Perturbation $H_0$.
  \end{minipage}
  \hfil
  \begin{minipage}{0.31\linewidth}
    \centering
    \includegraphics[width=\linewidth]{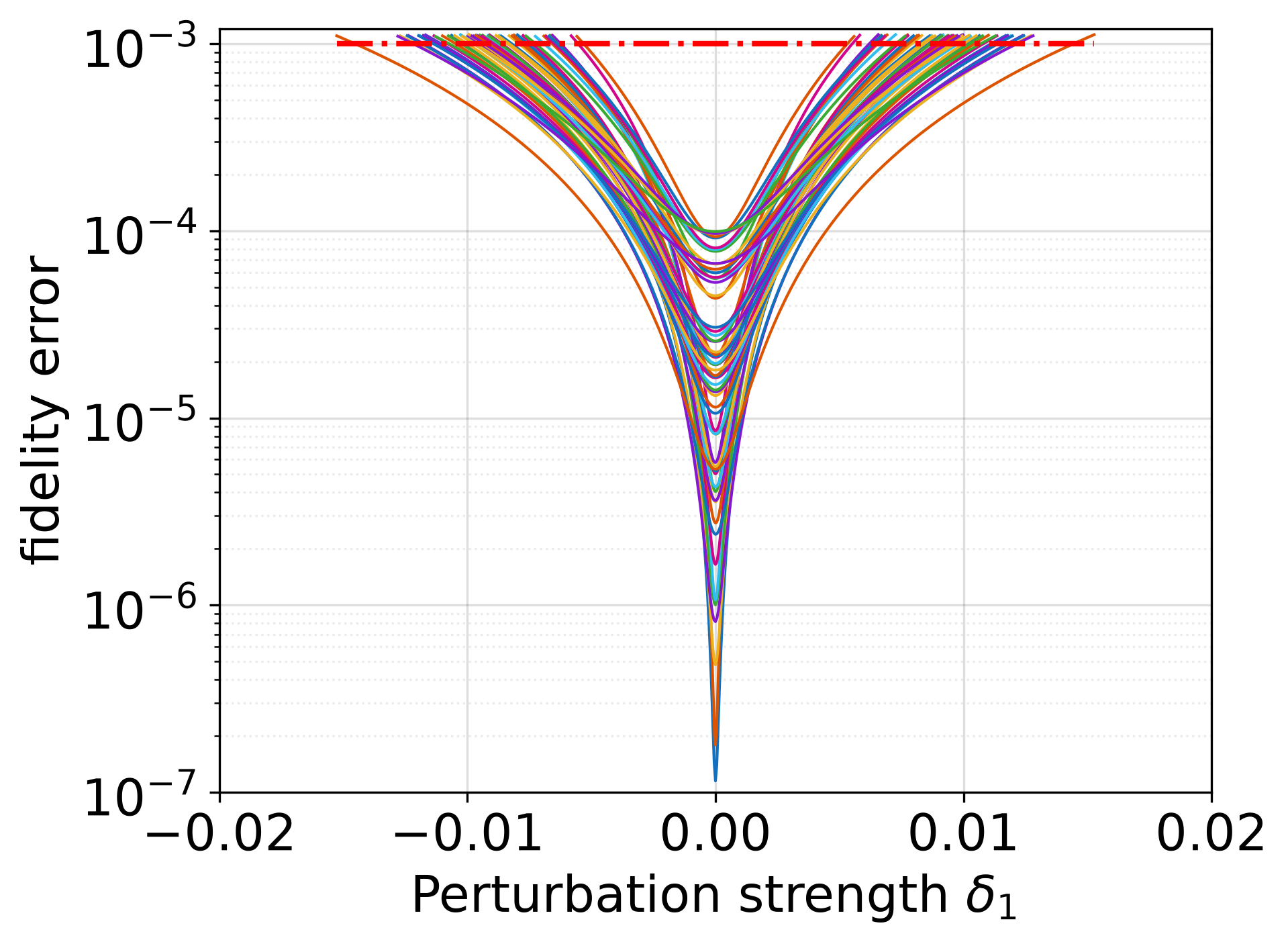}
    \\ \footnotesize (b) Perturbation $H_1$.
  \end{minipage}
  \hfil
  \begin{minipage}{0.31\linewidth}
    \centering
    \includegraphics[width=\linewidth]{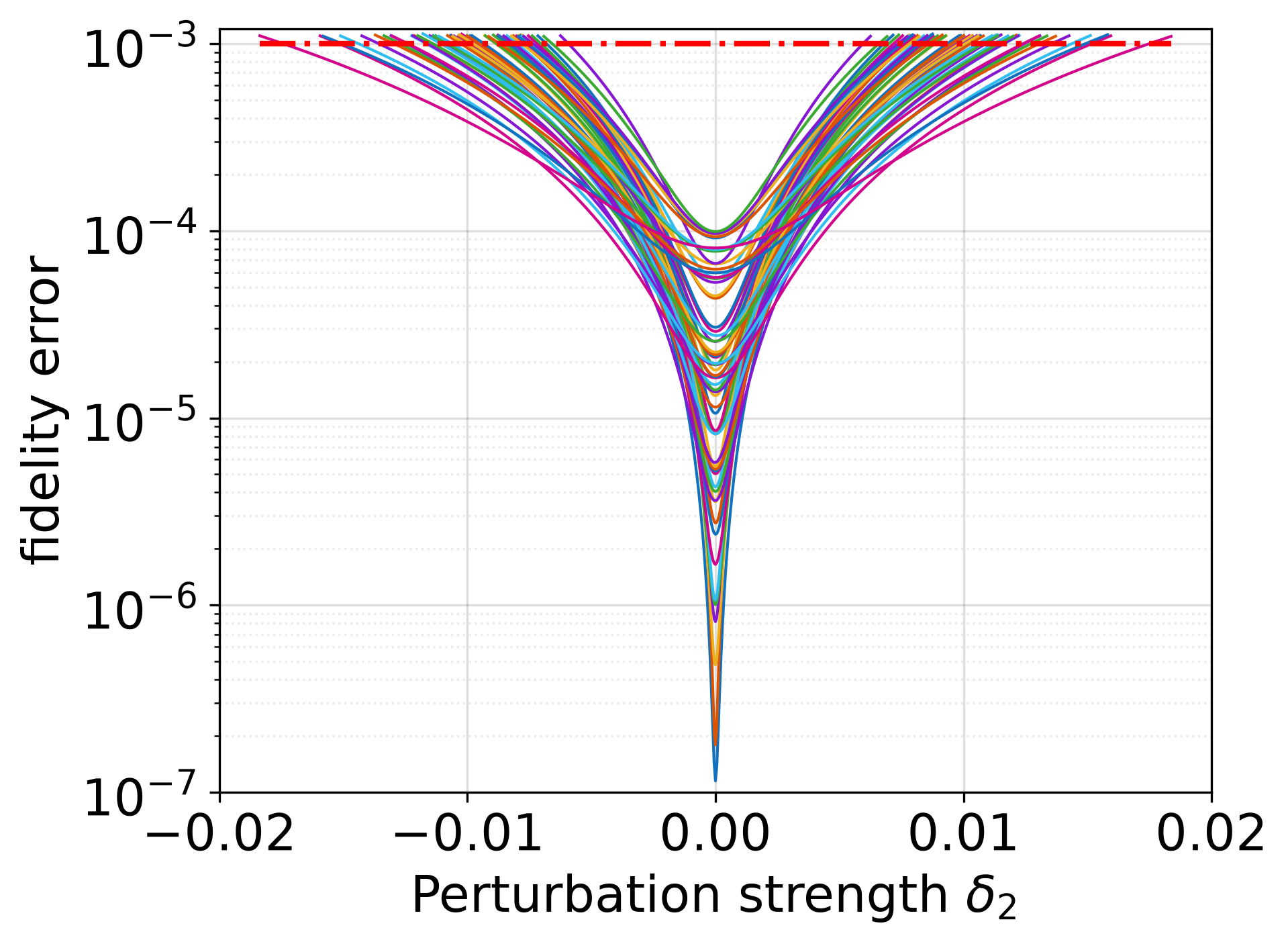}
    \\ \footnotesize (c) Perturbation $H_2$.
  \end{minipage}
  \caption{Fidelity error versus perturbation strength for $H_0,H_1,H_2$ perturbations across the controllers. The dashed line indicates the fidelity-error threshold $10^{-3}$.}\label{fig:fid_err_vs_perturbation_strength}
\end{figure*}

\begin{figure}[!t]
  \centering
  \includegraphics[width=0.7\linewidth]{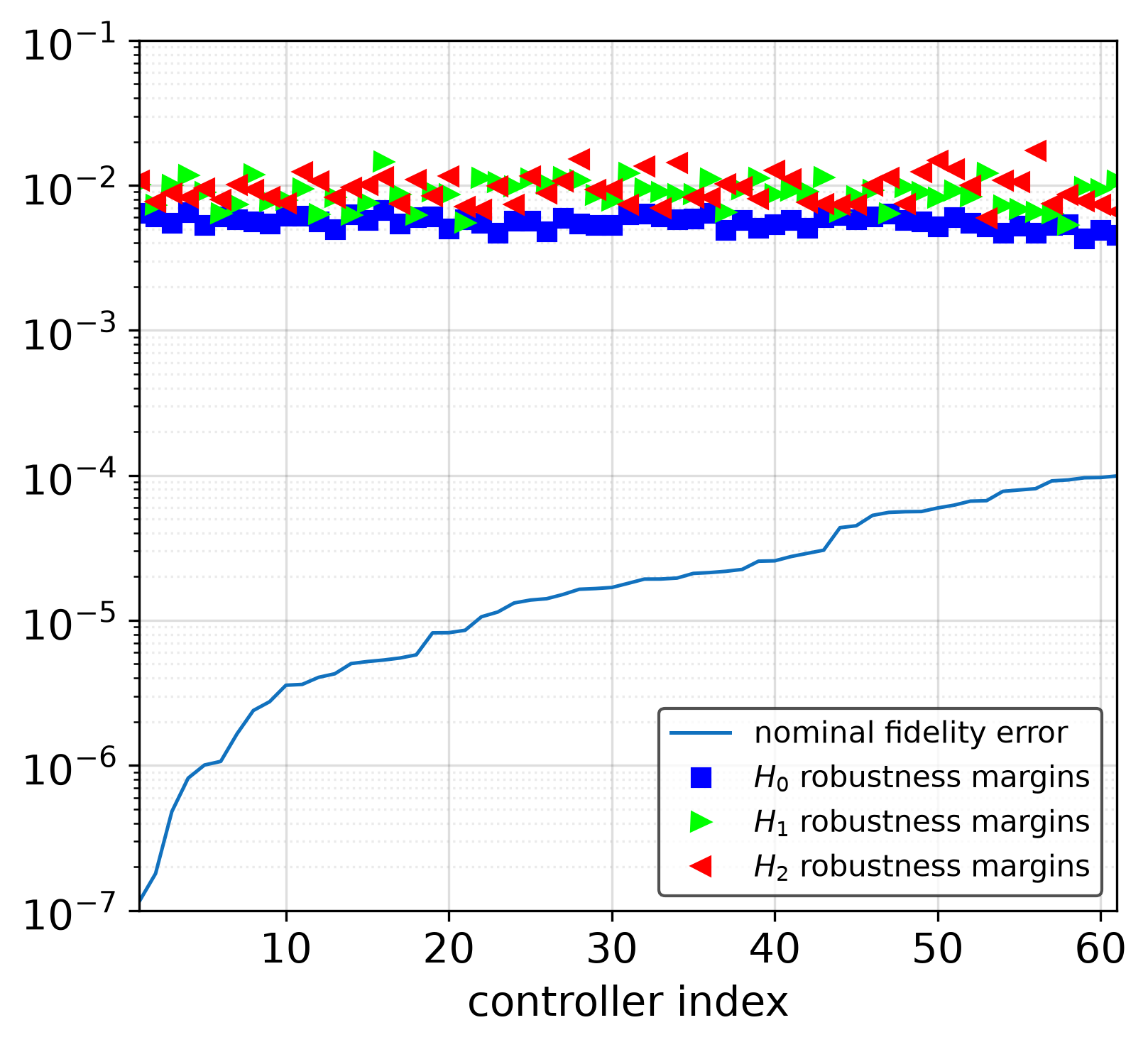}
  \caption{Robustness margins for $H_0$, $H_1$, and $H_2$ across the $61$ controllers ordered by increasing nominal fidelity error $\varepsilon_0$.}\label{fig:margins_vs_fid_err}
\end{figure}

\begin{figure}[!t]
  \centering
  \includegraphics[width=0.8\linewidth]{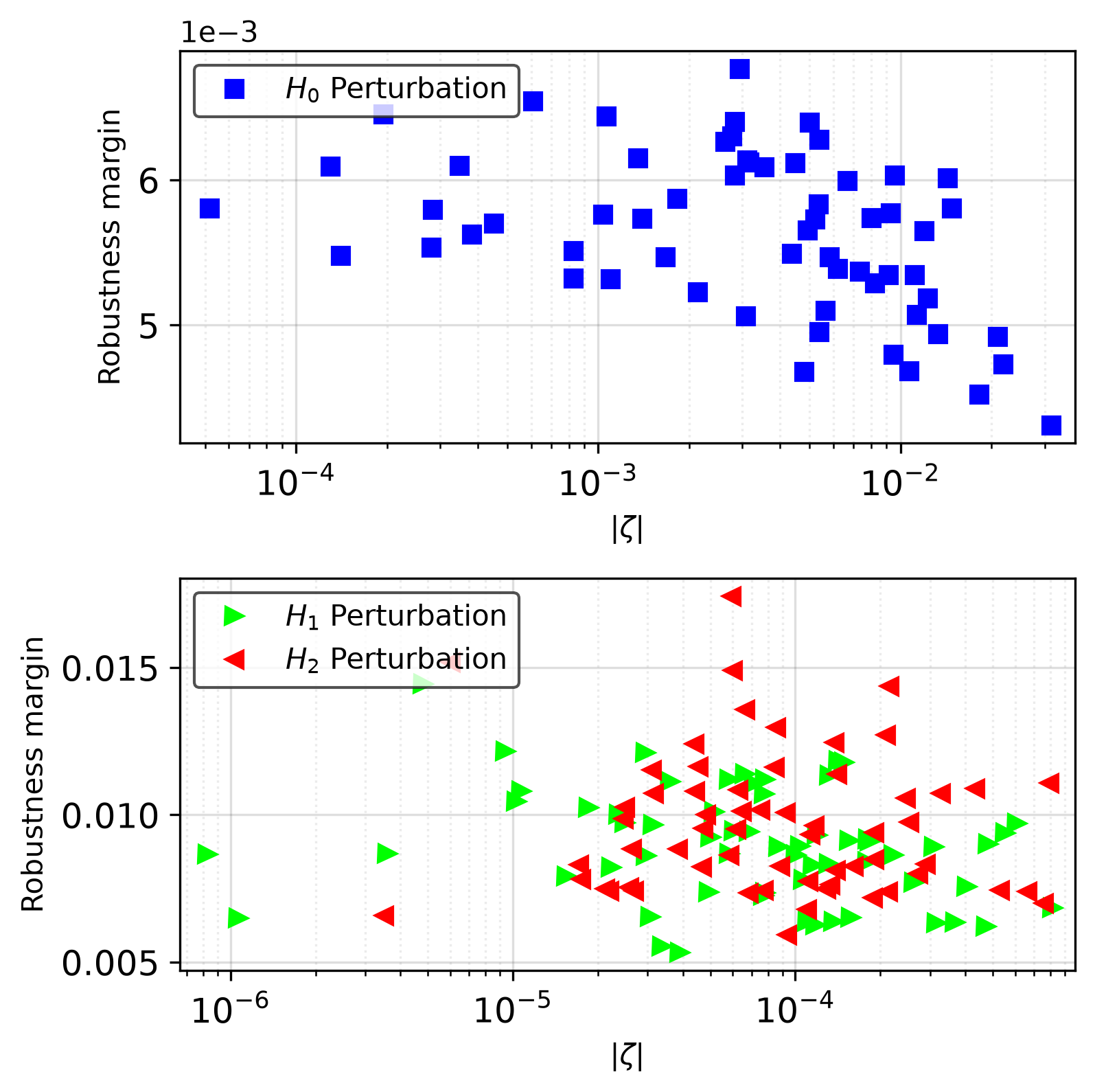}
  \caption{Robustness margins versus nominal differential-sensitivity magnitude $|\zeta_j|$ for $H_0$ (top) and $H_1$, $H_2$ (bottom).}\label{fig:margins_vs_diff_sens}
\end{figure}

\begin{table}[!t]
  \caption{Correlations among nominal error $\varepsilon_0$, margins $M_j:=M_{H_j}$, and nominal sensitivity magnitudes $|\zeta_j|$, $\zeta_j:=\zeta_{H_j}$ from~\eqref{eq:sens}. Since $M_j=\min\{M_{j,-},M_{j,+}\}$ is invariant under reversal of the parameter coordinate, while $\zeta_j$ changes sign, $|\zeta_j|$ is the natural orientation-invariant local comparator. Upper triangle: Pearson $r$ (linear association); lower triangle: Spearman $\rho$ (monotone rank association). Both are reported descriptively.}\label{tab:correlations}
  \centering
  \begin{tabular}{@{}lccccccc@{}}
    \toprule
    & $\varepsilon_0$ & $M_0$ & $M_1$ & $M_2$ & $|\zeta_0|$ & $|\zeta_1|$ & $|\zeta_2|$ \\
    \midrule
    $\varepsilon_0$ & $1.00$ & $-0.45$ & $-0.16$ & $0.02$ & $0.69$ & $0.06$ & $-0.06$ \\
    $M_0$ & $-0.38$ & $1.00$ & $0.12$ & $0.00$ & $-0.58$ & $0.13$ & $0.23$ \\
    $M_1$ & $-0.04$ & $0.08$ & $1.00$ & $-0.10$ & $-0.06$ & $-0.25$ & $-0.06$ \\
    $M_2$ & $-0.04$ & $0.04$ & $-0.13$ & $1.00$ & $0.10$ & $-0.07$ & $-0.12$ \\
    $|\zeta_0|$ & $0.70$ & $-0.46$ & $-0.06$ & $0.12$ & $1.00$ & $-0.07$ & $-0.10$ \\
    $|\zeta_1|$ & $0.01$ & $0.17$ & $-0.26$ & $-0.07$ & $0.06$ & $1.00$ & $0.43$ \\
    $|\zeta_2|$ & $-0.04$ & $0.30$ & $-0.08$ & $-0.06$ & $-0.16$ & $0.48$ & $1.00$ \\
    \bottomrule
  \end{tabular}
\end{table}

\section{Case Study: Gate Optimization}\label{sec:example}

To illustrate these results, we consider dynamic controllers optimized for maximum gate fidelity in a three-spin chain with Heisenberg coupling~\cite{Floether_2012}. As opposed to full spin addressability, we consider the case where the control is applied only to the initial spin of the chain, a more challenging optimization problem. The drift and interaction Hamiltonian matrices are
\begin{align}
  \begin{split}\label{eq:case_study_hamiltonians}
    &H_0 = \frac{1}{2} \sum_{\ell = 1}^{2} \left( \sigma_{x}^{(\ell)}\sigma_{x}^{(\ell +1)} + \sigma_{y}^{(\ell)} \sigma_{y}^{(\ell+1)} + \sigma_{z}^{(\ell)}\sigma_{z}^{(\ell+1)} \right), \\
    &H_{1} = 2 \sigma_{x}^{(1)}, \quad H_{2} = 2 \sigma_{y}^{(1)},
  \end{split}
\end{align}
where $\sigma_q$, $q\in\{x,y,z\}$, are the Pauli spin operators. Here $\sigma_q^{(\ell)}$ is the three-fold tensor product with $\sigma_q$ in the $\ell$th position and the $2\times 2$ identity matrix $I_2$ in the others. All controllers share one fixed randomly drawn target unitary $U_f$, with $U(0)=I_8$. As in~\cite{oneil_2024_sensitivity_bounds}, the gate operation time is $t_f = 15$ with $\tau = 32$ time steps.

We synthesize $100$ unconstrained piecewise-constant controllers by GRAPE with a quasi-Newton update, each from an independent $\mathcal{N}(0,1)$ initialization (up to $500$ iterations, fidelity tolerance $10^{-12}$), and retain the $61$ with nominal fidelity error $\varepsilon_0\le 10^{-4}$. Robustness margins use $\mathcal{F}_T=1-10^{-3}$ and Algorithm~\ref{algorithm} with fidelity-band tolerance $\eta=10^{-6}$. The reported values are floating-point evaluations of analytically certified lower bounds; $\eta$ controls the terminal fidelity surplus rather than a relative parameter-space error. Version~1.0.2 of the accompanying toolbox, together with the frozen controller data and reproduction scripts, produces the figures, Table~\ref{tab:correlations}, and numerical results reported here~\cite{qrobustness}. Post hoc safe/unsafe refinement brackets the first threshold boundary between the reported margin $\mathcal{M}$ and an upper witness $\mathcal{M}_{\mathrm{upper}}$; it gave $(\mathcal{M}_{\mathrm{upper}}-\mathcal{M})/\mathcal{M}\le 5.5\times10^{-4}$ for every reported margin and changed the reported spread factors by less than $0.1\%$.

In the comparisons below, $\zeta_j = \partial\mathcal{F}/\partial\delta_j|_{\delta_j=0}$ is evaluated once, at the nominal controller. Algorithm~\ref{algorithm} neither uses nor recomputes $\zeta_j$ along the continuation path. Each certified step uses only the uniform constant $L_{\hat{H}_j}$ and the fidelity at the current recentering point.

As discussed in Section~\ref{sec:robustness-margins}, we take the uncertainty structures $\{\hat{H}_j\}_{j=0}^{2}$ to be the (unnormalized) Hamiltonians $H_0$, $H_1$, and $H_2$ in~\eqref{eq:case_study_hamiltonians}, matching the convention used for $C_{\hat{H}}$ there. Reported margins are therefore in the same units as the corresponding multiplicative perturbation of those Hamiltonians.

Fig.~\ref{fig:fid_err_vs_perturbation_strength} shows fidelity-error trajectories under $H_0$, $H_1$, $H_2$ perturbations. Table~\ref{tab:correlations} and Figs.~\ref{fig:margins_vs_fid_err}--\ref{fig:margins_vs_diff_sens} show that the relationships between nominal fidelity error, nominal differential-sensitivity magnitude, and finite robustness margins are structure-dependent. For the drift structure, $|\zeta_0|$ is positively associated with $\varepsilon_0$ and negatively associated with $M_0$ under both Pearson and Spearman measures. For the control structures, the corresponding associations are weak, especially for $H_2$. Thus, the data show a clear association for drift uncertainty but no consistent ranking of the control-structure margins from nominal sensitivity alone. The finite margins therefore add information about off-nominal behavior that is not fully captured by nominal fidelity or local sensitivity. The $H_0$ margins occupy a narrower range than the $H_1,H_2$ margins in this example, but the reported quantities are structure-specific multiplicative perturbation margins, so the comparison should not be interpreted as a universal physical ordering.

\section{Conclusion}\label{sec:conclusion}

We developed a structure-specific scalar fidelity-threshold margin for finite-time gate control. A trace-amplitude fidelity sensitivity bound yields a threshold-dependent certified local radius for a physical Hamiltonian parameter, while recentered continuation advances this certificate toward the first boundary of the nominal connected safe component. Across the $61$ three-qubit controllers the margins differ by a factor of $1.6$ for the drift structure and $2.7$--$2.9$ for the control structures. The drift margin is appreciably anticorrelated with the nominal sensitivity magnitude, whereas the control-structure margins show only weak associations with their corresponding nominal sensitivities. Extensions to simultaneous parameters, arbitrary within-gate trajectories, Lindblad-rate uncertainty, and margin-aware synthesis are left to subsequent work.


\begin{thebibliography}{10}
\providecommand{\url}[1]{#1}
\csname url@samestyle\endcsname
\providecommand{\newblock}{\relax}
\providecommand{\bibinfo}[2]{#2}
\providecommand{\BIBentrySTDinterwordspacing}{\spaceskip=0pt\relax}
\providecommand{\BIBentryALTinterwordstretchfactor}{4}
\providecommand{\BIBentryALTinterwordspacing}{\spaceskip=\fontdimen2\font plus
\BIBentryALTinterwordstretchfactor\fontdimen3\font minus \fontdimen4\font\relax}
\providecommand{\BIBforeignlanguage}[2]{{%
\expandafter\ifx\csname l@#1\endcsname\relax
\typeout{** WARNING: IEEEtran.bst: No hyphenation pattern has been}%
\typeout{** loaded for the language `#1'. Using the pattern for}%
\typeout{** the default language instead.}%
\else
\language=\csname l@#1\endcsname
\fi
#2}}
\providecommand{\BIBdecl}{\relax}
\BIBdecl

\bibitem{Koch_2022}
C.~P. Koch, U.~Boscain, T.~Calarco, G.~Dirr, S.~Filipp, S.~J. Glaser, R.~Kosloff, S.~Montangero, T.~Schulte-Herbr{\"u}ggen, D.~Sugny, and F.~K. Wilhelm, ``Quantum optimal control in quantum technologies. strategic report on current status, visions and goals for research in {Europe},'' \emph{{EPJ} Quantum Technol.}, vol.~9, no.~1, p.~19, 2022.

\bibitem{Petersen2013}
I.~R. Petersen, ``Robustness issues in quantum control,'' in \emph{Encyclopedia of Systems and Control}.\hskip 1em plus 0.5em minus 0.4em\relax Springer, 2013, pp. 1--7.

\bibitem{Automatica}
C.~A. Weidner, E.~A. Reed, J.~Monroe, B.~Sheller, S.~P. O'Neil, E.~Maas, E.~A. Jonckheere, F.~C. Langbein, and S.~G. Schirmer, ``Robust quantum control in closed and open systems: Theory and practice,'' \emph{Automatica}, vol. 172, p. 111987, 2025.

\bibitem{Doyle1982}
J.~Doyle, ``Analysis of feedback systems with structured uncertainties,'' \emph{IEE Proc. D}, vol. 129, no.~6, pp. 242--250, 1982.

\bibitem{Zhou}
K.~Zhou and J.~C. Doyle, \emph{Essentials of Robust Control}.\hskip 1em plus 0.5em minus 0.4em\relax Prentice Hall, 1998.

\bibitem{schirmer2024}
S.~P. O'Neil, C.~A. Weidner, E.~A. Jonckheere, F.~C. Langbein, and S.~G. Schirmer, ``Robustness of dynamic quantum control: Differential sensitivity bounds,'' \emph{{AVS} Quantum Sci.}, vol.~6, no.~3, p. 032001, 2024.

\bibitem{oneil_2024_sensitivity_bounds}
S.~P. O'Neil, E.~A. Jonckheere, and S.~Schirmer, ``Sensitivity bounds for quantum control and time-domain performance guarantees,'' \emph{{IEEE} Control Syst. Lett.}, vol.~8, pp. 169--174, 2024.

\bibitem{lidar2008}
D.~A. Lidar, P.~Zanardi, and K.~Khodjasteh, ``Distance bounds on quantum dynamics,'' \emph{Phys. Rev. A}, vol.~78, p. 012308, 2008.

\bibitem{berberich2024}
J.~Berberich, D.~Fink, and C.~Holm, ``Robustness of quantum algorithms against coherent control errors,'' \emph{Phys. Rev. A}, vol. 109, p. 012417, 2024.

\bibitem{berberich2025}
\BIBentryALTinterwordspacing
J.~Berberich, T.~Fellner, R.~L. Kosut, and C.~Holm, ``Robustness of quantum algorithms: Worst-case fidelity bounds and implications for design,'' 2026. [Online]. Available: \url{https://arxiv.org/abs/2509.08481}
\BIBentrySTDinterwordspacing

\bibitem{kosut2025}
\BIBentryALTinterwordspacing
R.~L. Kosut, D.~A. Lidar, and H.~Rabitz, ``A fundamental bound for robust quantum gate control,'' 2025. [Online]. Available: \url{https://arxiv.org/abs/2507.01215}
\BIBentrySTDinterwordspacing

\bibitem{kiely2024}
P.~M. Poggi, G.~{De Chiara}, S.~Campbell, and A.~Kiely, ``Universally robust quantum control,'' \emph{Phys. Rev. Lett.}, vol. 132, no.~19, p. 193801, 2024.

\bibitem{zou2025}
Z.-J. Chen, H.~Huang, L.~Sun, Q.-X. Jie, J.~Zhou, Z.~Hua, Y.~Xu, W.~Wang, G.-C. Guo, C.-L. Zou, L.~Sun, and X.-B. Zou, ``Robust and optimal control of open quantum systems,'' \emph{Sci. Adv.}, vol.~11, no.~9, p. eadr0875, 2025.

\bibitem{KHANEJA_2005}
N.~Khaneja, T.~Reiss, C.~Kehlet, T.~Schulte-Herbr{\"u}ggen, and S.~J. Glaser, ``Optimal control of coupled spin dynamics: Design of {NMR} pulse sequences by gradient ascent algorithms,'' \emph{J. Magn. Reson.}, vol. 172, no.~2, pp. 296--305, 2005.

\bibitem{oneil_geometric}
S.~P. O'Neil, E.~A. Jonckheere, and S.~Schirmer, ``Geometric interpretation of sensitivity to structured uncertainties in spintronic networks,'' \emph{{IEEE} Control Syst. Lett.}, vol.~9, pp. 192--197, 2025.

\bibitem{oneil_2024_log_sens}
S.~O'Neil, S.~Schirmer, F.~C. Langbein, C.~A. Weidner, and E.~A. Jonckheere, ``Time-domain sensitivity of the tracking error,'' \emph{{IEEE} Trans. Autom. Control}, vol.~69, no.~4, pp. 2340--2351, 2024.

\bibitem{Floether_2012}
F.~F. Floether, P.~de~Fouqui{\`e}res, and S.~G. Schirmer, ``Robust quantum gates for open systems via optimal control: {Markovian} versus non-{Markovian} dynamics,'' \emph{New J. Phys.}, vol.~14, no.~7, p. 073023, 2012.

\bibitem{qrobustness}
F.~C. Langbein, S.~P. O'Neil, S.~Schirmer, C.~A. Weidner, and E.~A. Jonckheere, ``Fidelity-based quantum robustness margins,'' Zenodo, 2026, ver.\ 1.0.2, doi: \href{https://doi.org/10.5281/zenodo.21776873}{10.5281/zenodo.21776873}. Available: \url{https://qyber.black/spinnet/code-quantum-robustness-margins}.

\end{thebibliography}

\end{document}